\newtheorem{theorem}{Theorem}
\newtheorem{definition}[theorem]{Definition}
\newtheorem{example}[theorem]{Example}
\newtheorem{lemma}[theorem]{Lemma}
\newtheorem{notation}[theorem]{Notation}
\newtheorem{remark}[theorem]{Remark}
\newenvironment{proof}[1][Proof]{\textbf{#1.} }{\ \rule{0.5em}{0.5em}}
\newdimen\dummy
\begin{document}

\title{The decomposition of the regular asynchronous systems as parallel connection
of regular asynchronous systems}
\author{Serban E. Vlad\\Str. Zimbrului, Nr. 3, Bl. PB68, Ap. 11, 410430, Oradea, Romania, email: serban\_e\_vlad@yahoo.com}
\maketitle

\begin{abstract}
The asynchronous systems are the non-deterministic models of the asynchronous
circuits from the digital electrical engineering, where non-determinism is a
consequence of the fact that modelling is made in the presence of unknown and
variable parameters. Such a system is a multi-valued function $f$ that assigns
to an (admissible) input $u:\mathbf{R}\rightarrow\{0,1\}^{m}$ a set $f(u)$ of
(possible) states $x:\mathbf{R}\rightarrow\{0,1\}^{n}.$ When this assignment
is defined by making use of a so-called generator function $\Phi
:\{0,1\}^{n}\times\{0,1\}^{m}\rightarrow\{0,1\}^{n},$ then the asynchronous
system $f$ is called regular. The generator function $\Phi$ acts in this
asynchronous framework similarly with the next state function from a
synchronous framework. The parallel connection of the asynchronous systems
$f^{\prime}$ and $f^{\prime\prime}$ is the asynchronous system $(f^{\prime
}||f^{\prime\prime})(u)=f^{\prime}(u)\times f^{\prime\prime}(u).$ The purpose
of the paper is to give the circumstances under which a regular asynchronous
system $f$ may be written as a parallel connection of regular asynchronous systems.

\end{abstract}

\section{Introduction}

The theory of modeling the asynchronous circuits from the digital electrical
engineering has its origin in the switching (circuits) theory of the 50's and
the 60's. By that time, researchers used the discrete time and mathematics
seemed to be uncensored. After 1970, instead of switching theory (more
exactly: instead of what we understand by switching theory), the analysis of
these circuits is made in general by engineers that give approximate
descriptions of the switching phenomena and draw pictures instead of writing
equations. In fact the implicit suggestion given by the literature is that the
real research is unpublished and a continuation of the switching theory,
interrupted 40 years ago, is necessary. We have tried to do so and we have
called this attempt the asynchronous systems theory.

The asynchronous systems theory makes use of $\mathbf{R}\rightarrow\{0,1\}$
functions (real time, binary values) that are not studied at all in literature
(and have never been) as far as we know. The 'nice' $\mathbf{R}\rightarrow
\{0,1\}$ functions are called signals, making us think of the digital
electrical signals.

The modeling of the asynchronous circuits is made in the presence of unknown
and variable parameters: the tension of the mains, the temperature, the delays
that depend on technology. We could have used for this reason three valued
signals $\mathbf{R}\rightarrow\{0,1,2\}$, but we prefer the present frame
because of its algebraical advantages, the set $\{0,1\}$ is organized as a
Boole algebra and as a field, unlike the set $\{0,1,2\}.$ A certain price must
be paid however, determinism is replaced by non-determinism, meaning that the
systems working with $\mathbf{R}\rightarrow\{0,1,2\}$ functions may be
considered to be input-output functions, but the systems working with
$\mathbf{R}\rightarrow\{0,1\}$ functions must be considered input-output
multi-valued functions. Thus an input is a function $u:\mathbf{R}%
\rightarrow\{0,1\}^{m},$ representing the cause and a state=output is a
function $x:\mathbf{R}\rightarrow\{0,1\}^{n},$ representing the effect. The
input is subject to certain constraints, not all the signals $\mathbf{R}%
\rightarrow\{0,1\}^{m}$ are allowed to be causes and this is why it is called
admissible. The state is not unique, the system $f$ assigns to $u$ a family
$f(u)$ of states $x\in f(u)$ that are called possible states; the unknown and
the variable parameters that accompany the switching phenomena and $f$ give us
the certitude that when $u$ is applied, an element of $f(u)$ will result.

The meaning of regularity is that of giving a special case of system, i.e.
input-output multi-valued function, when a certain circuit is really modeled,
since input-output multi-valued functions exist that model nothing. The
regular asynchronous systems are these systems $f$ that are 'generated' by a
so-called 'generator function' $\Phi:\{0,1\}^{n}\times\{0,1\}^{m}%
\rightarrow\{0,1\}^{n}$ (or 'network function' as Grigore Moisil called it;
for Moisil a 'network' is a circuit identified with its model, the system).

The parallel connection of two systems $f^{\prime}$ and $f^{\prime\prime}$ is
the system $(f^{\prime}||f^{\prime\prime})(u)=f^{\prime}(u)\times
f^{\prime\prime}(u)$ in which $f^{\prime}$ and $f^{\prime\prime}$ act
independently on each other, but under a common input $u$. When $f^{\prime}$
is generated by $\Phi^{\prime}$ and $f^{\prime\prime}$ is generated by
$\Phi^{\prime\prime},f^{\prime}||f^{\prime\prime}$ is a regular system also,
which is generated by a function denoted by $\Phi^{\prime}||\Phi^{\prime
\prime}.$ In $\Phi^{\prime}||\Phi^{\prime\prime}$ some coordinates do not
depend on other coordinates. This suggests the idea of considering the
converse situation, when, from the fact that a system $f$ is generated by the
function $\Phi$ and in $\Phi$ some coordinates do not depend on other
coordinates, we can infer the existence of some systems $f^{\prime}%
,f^{\prime\prime}$ and of some functions $\Phi^{\prime}:\{0,1\}^{n^{\prime}%
}\times\{0,1\}^{m}\rightarrow\{0,1\}^{n^{\prime}},\Phi^{\prime\prime
}:\{0,1\}^{n^{\prime\prime}}\times\{0,1\}^{m}\rightarrow\{0,1\}^{n^{\prime
\prime}},$ such that $n^{\prime}+n^{\prime\prime}=n,$ $f^{\prime}$ is
generated by $\Phi^{\prime},f^{\prime\prime}$ is generated by $\Phi
^{\prime\prime},$ $\Phi=\Phi^{\prime}||\Phi^{\prime\prime}$ and $f=f^{\prime
}||f^{\prime\prime}.$ Our present purpose is to study this possibility.

\section{Preliminaries}

\begin{definition}
The set $\mathbf{B}=\{0,1\}$ endowed with the usual algebraical laws
$\overline{\;\;},$ $\cup,\;\cdot,\;\oplus\;$and with the order $0<1$ is called
the \textbf{binary Boole algebra}.
\end{definition}

\begin{definition}
The \textbf{characteristic function} $\chi_{A}:\mathbf{R}\rightarrow
\mathbf{B}$ of the set $A\subset\mathbf{R}$ is defined by $\forall t\in A,$%
\[
\chi_{A}(t)=\left\{
\begin{array}
[c]{c}%
1,t\in A\\
0,t\notin A
\end{array}
\right.  .
\]
\end{definition}

\begin{notation}
We denote by $Seq$ the set of the sequences $t_{k}\in\mathbf{R},$
$k\in\mathbf{N}$ which are strictly increasing $t_{0}<t_{1}<t_{2}<...$ and
unbounded from above. The elements of $Seq$ will be denoted in general by
$(t_{k}).$
\end{notation}

\begin{definition}
The \textbf{signals} (or the $n-$\textbf{signals}) are by definition the
$\mathbf{R}\rightarrow\mathbf{B}^{n}$ functions of the form%
\begin{equation}
x(t)=\mu\cdot\chi_{(-\infty,t_{0})}(t)\oplus x(t_{0})\cdot\chi_{\lbrack
t_{0},t_{1})}(t)\oplus...\oplus x(t_{k})\cdot\chi_{\lbrack t_{k},t_{k+1}%
)}(t)\oplus... \label{pre1}%
\end{equation}
where $t\in\mathbf{R},$ $\mu\in\mathbf{B}^{n}$ and $(t_{k})\in Seq.$ The set
of the signals is denoted by $S^{(n)}.$
\end{definition}

\begin{definition}
In (\ref{pre1}), $\mu$ is called the \textbf{initial value} of $x$ and its
usual notation is $x(-\infty+0)$.
\end{definition}

\begin{definition}
\label{Def6}The \textbf{Cartesian product} of the functions $x^{\prime
}:\mathbf{R}\rightarrow\mathbf{B}^{n^{\prime}}$ and $x^{\prime\prime
}:\mathbf{R}\rightarrow\mathbf{B}^{n^{\prime\prime}}$ is the $\mathbf{R}%
\rightarrow\mathbf{B}^{n^{\prime}+n^{\prime\prime}}$ function denoted by
$(x^{\prime},x^{\prime\prime}),$ $(x^{\prime}(t),x^{\prime\prime}(t)),$
$x^{\prime}\times x^{\prime\prime}$ or $x^{\prime}(t)\times x^{\prime\prime
}(t)$ which is defined by $\forall i\in\{1,...,n^{\prime}+n^{\prime\prime
}\},\forall t\in\mathbf{R},$%
\[
(x^{\prime}(t),x^{\prime\prime}(t))_{i}=\left\{
\begin{array}
[c]{c}%
x_{i}^{\prime}(t),i\in\{1,...,n^{\prime}\},\\
x_{i}^{\prime\prime}(t),i\in\{n^{\prime}+1,n^{\prime}+n^{\prime\prime}\}
\end{array}
\right.  .
\]
\end{definition}

\begin{notation}
For any set $M$, we denote with $P^{\ast}(M)$ the set of the non-empty subsets
of $M.$
\end{notation}

\begin{definition}
\label{Def5}If $X^{\prime}\in P^{\ast}(S^{(n^{\prime})})$ and $X^{\prime
\prime}\in P^{\ast}(S^{(n^{\prime\prime})}),$ then their \textbf{Cartesian
product} $X^{\prime}\times X^{\prime\prime}\in P^{\ast}(S^{(n^{\prime
}+n^{\prime\prime})})$ is defined by%
\[
X^{\prime}\times X^{\prime\prime}=\{(x^{\prime},x^{\prime\prime})|x^{\prime
}\in X^{\prime},x^{\prime\prime}\in X^{\prime\prime}\}.
\]
\end{definition}

\begin{remark}
The signals model the electrical signals from the digital electrical
engineering and $\mathbf{R}$ is the time set. The last two definitions of the
Cartesian product replace the usual definition of $x^{\prime}\times
x^{\prime\prime}$ as a $\mathbf{R}\times\mathbf{R}\rightarrow\mathbf{B}%
^{n^{\prime}}\times\mathbf{B}^{n^{\prime\prime}}$ function with $x^{\prime
}\times x^{\prime\prime}:\mathbf{R}\rightarrow\mathbf{B}^{n^{\prime}%
+n^{\prime\prime}}$ because the time is unique. We have suggested this idea in
the notation of $S^{(n)}$ which is not $S^{n},$ because the Cartesian product
$\underset{n}{\underbrace{S\times...\times S}}=S^{(n)}$ is taken a little
differently, in the sense of Definitions \ref{Def6}, \ref{Def5}. In such
definitions we often identify $\mathbf{B}^{n^{\prime}}\times\mathbf{B}%
^{n^{\prime\prime}}$ with $\mathbf{B}^{n^{\prime}+n^{\prime\prime}}$ and
$S^{(n^{\prime})}\times S^{(n^{\prime\prime})}$ with $S^{(n^{\prime}%
+n^{\prime\prime})}.$
\end{remark}

\section{Asynchronous systems. Regularity}

\begin{definition}
\label{Def10}An \textbf{asynchronous system} is a multi-valued function
$f:U\rightarrow P^{\ast}(S^{(n)}),U\in P^{\ast}(S^{(m)}).\;U$ is called the
\textbf{input set} and its elements $u\in U$ are called (\textbf{admissible})
\textbf{inputs}, while the functions $x\in f(u)$ are called (\textbf{possible}%
) \textbf{states}.
\end{definition}

\begin{example}
\label{Exa1}The identity function $1_{\mathbf{B}}:\mathbf{B}\rightarrow
\mathbf{B}$ is implemented in electrical engineering by the \textbf{delay
circuit}. Such a circuit may be modelled by the system $f:S^{(1)}\rightarrow
P^{\ast}(S^{(1)}),$ called itself \textbf{delay}, which is given by the double
inequality%
\[
\underset{\xi\in\lbrack t-\tau,t)}{\bigcap}u(\xi)\leq x(t)\leq\underset{\xi
\in\lbrack t-\tau,t)}{\bigcup}u(\xi),
\]
where $\tau>0$ and $u,x\in S^{(1)}.$ We have for example%
\[
f(\chi_{\lbrack0,\infty)})=\{x|x\in S^{(1)},\underset{\xi\in\lbrack t-\tau
,t)}{\bigcap}\chi_{\lbrack0,\infty)}(\xi)\leq x(t)\leq\underset{\xi\in\lbrack
t-\tau,t)}{\bigcup}\chi_{\lbrack0,\infty)}(\xi)\}
\]%
\[
=\{x|x\in S^{(1)},\chi_{\lbrack\tau,\infty)}(t)\leq x(t)\leq\chi_{(0,\infty
)}(t)\}=\{y\cdot\chi_{(0,\tau)}\oplus\chi_{\lbrack\tau,\infty)}|y\in
S^{(1)}\},
\]
thus when the input is $\chi_{\lbrack0,\infty)},$ the state is $0$ for
$t\leq0,$ it is uncertain in the interval $(0,\tau)$ and it is equal with $1$
for $t\geq\tau.$ This system computes $\forall\lambda\in\mathbf{B}$ the value
$1_{\mathbf{B}}(\lambda)=\lambda$ in at most $\tau$ time units.
\end{example}

\begin{definition}
Let be the system $f:U\rightarrow P^{\ast}(S^{(n)}),U\in P^{\ast}(S^{(m)}).$
The function $\phi_{0}:U\rightarrow P^{\ast}(\mathbf{B}^{n})$ defined by
$\forall u\in U,$%
\[
\phi_{0}(u)=\{x(-\infty+0)|x\in f(u)\}
\]
is called the \textbf{initial state function} of $f$.
\end{definition}

\begin{definition}
\label{Def13}The binary sequence $\alpha:\mathbf{N}\rightarrow\mathbf{B}%
^{n},\alpha(k)=\alpha^{k},k\in\mathbf{N}$ is called \textbf{progressive} if
$\forall i\in\{1,...,n\},$ the set $\{k|k\in\mathbf{N},\alpha_{i}^{k}=1\}$ is
infinite. The set of the progressive sequences is denoted by $\Pi_{n}.$
\end{definition}

\begin{definition}
\label{Def14}The function $\rho:\mathbf{R}\rightarrow\mathbf{B}^{n}$ is called
\textbf{progressive} if it is of the form%
\begin{equation}
\rho(t)=\alpha^{0}\cdot\chi_{\{t_{0}\}}(t)\oplus...\oplus\alpha^{k}\cdot
\chi_{\{t_{k}\}}(t)\oplus... \label{asr1}%
\end{equation}
with $\alpha\in\Pi_{n}$ and $(t_{k})\in Seq.$ The set of the progressive
functions is denoted by $P_{n}.$
\end{definition}

\begin{definition}
\label{Def15}Let be the function $\Phi:\mathbf{B}^{n}\times\mathbf{B}%
^{m}\rightarrow\mathbf{B}^{n}.$ For any $\nu\in\mathbf{B}^{n}$, we define the
function $\Phi^{\nu}:\mathbf{B}^{n}\times\mathbf{B}^{m}\rightarrow
\mathbf{B}^{n}$ by $\forall\mu\in\mathbf{B}^{n},\forall\lambda\in
\mathbf{B}^{m},$%
\[
\Phi^{\nu}(\mu,\lambda)=(\overline{\nu_{1}}\cdot\mu_{1}\oplus\nu_{1}\cdot
\Phi_{1}(\mu,\lambda),...,\overline{\nu_{n}}\cdot\mu_{n}\oplus\nu_{n}\cdot
\Phi_{n}(\mu,\lambda)).
\]
\end{definition}

\begin{definition}
\label{Def16}For all $\rho\in P_{n}$ like in (\ref{asr1})$,$ the function
$\Phi^{\rho}:\mathbf{B}^{n}\times S^{(m)}\times\mathbf{R}\rightarrow
\mathbf{B}^{n}$ is defined by $\forall\mu\in\mathbf{B}^{n},\forall u\in
S^{(m)},\forall t\in\mathbf{R},$%
\[
\Phi^{\rho}(\mu,u,t)=\omega_{-1}\cdot\chi_{(-\infty,t_{0})}(t)\oplus\omega
_{0}\cdot\chi_{\lbrack t_{0},t_{1})}\oplus...\oplus\omega_{k}\cdot
\chi_{\lbrack t_{k},t_{k+1})}\oplus...
\]
where the family $\omega_{k}\in\mathbf{B}^{n},k\in\mathbf{N}\cup\{-1\}$ is
given by%
\[
\omega_{-1}=\mu,
\]%
\[
\omega_{k+1}=\Phi^{\alpha^{k+1}}(\omega_{k},u(t_{k+1})).
\]
\end{definition}

\begin{definition}
\label{Def17}The system $\Xi_{\Phi}:S^{(m)}\rightarrow P^{\ast}(S^{(n)})$
defined by $\forall u\in S^{(m)},$%
\[
\Xi_{\Phi}(u)=\{\Phi^{\rho}(\mu,u,\cdot)|\mu\in\mathbf{B}^{n},\rho\in P_{n}\}
\]
is called the \textbf{universal regular asynchronous system} that is generated
by the function $\Phi.$
\end{definition}

\begin{definition}
\label{Def18}A system $f:U\rightarrow P^{\ast}(S^{(n)}),U\in P^{\ast}%
(S^{(m)})$ is called \textbf{regular} if $\Phi$ exists such that $\forall u\in
U,f(u)\subset\Xi_{\Phi}(u).$ In this case the function $\Phi$ is called the
\textbf{generator function} of $f$ and we use to write $f\subset\Xi_{\Phi}.$
\end{definition}

\begin{remark}
The asynchronous systems, as defined by us at Definition \ref{Def10},
represent a very general concept and such a multi-valued function may model
nothing. The meaning of the regular asynchronous systems from Definition
\ref{Def18} is that of indicating a condition so that $f$ really models a
circuit, namely the circuit that implements the function $\Phi.$

For $\nu\in\mathbf{B}^{n}$, the function $\Phi^{\nu}$ computes in Definition
\ref{Def15} the coordinates $\Phi_{i}(\mu,\lambda),i\in\{1,...,n\}$ like this:
if $\nu_{i}=1,$ then $\Phi_{i}^{\nu}(\mu,\lambda)=\Phi_{i}(\mu,\lambda)$ thus
$\Phi_{i}(\mu,\lambda)$ is computed; if $\nu_{i}=0,$ then $\Phi_{i}^{\nu}%
(\mu,\lambda)=\mu_{i}$ thus $\Phi_{i}(\mu,\lambda)$ is not computed. The
property of progress of the function $\rho,$ Definitions \ref{Def13},
\ref{Def14} assures in $\Phi^{\rho}(\mu,u,t)$ from Definition \ref{Def16} the
fact that the generator function $\Phi$ (which is not unique in general for
some $f$) is computed in the following way: $\forall i\in\{1,...,n\},\forall
t\in\mathbf{R},\exists t_{k+1}>t$ such that $\Phi_{i}(\omega_{k},u(t_{k+1}))$
is computed. This philosophy is different from the one of Example \ref{Exa1},
where the value of the function $1_{\mathbf{B}}$ was computed in at most
$\tau$ time units.

The terminology 'universal' referring to the regular asynchronous systems
means maximal in the sense of the inclusion from Definition \ref{Def18}.
\end{remark}

\begin{theorem}
\label{The19}a) If $f\subset\Xi_{\Phi},$ then the function $\pi:\Delta
\rightarrow P^{\ast}(P_{n})$ exists,%
\begin{equation}
\Delta=\{(\mu,u)|u\in U,\mu\in\phi_{0}(u)\} \label{asr2}%
\end{equation}
such that $\forall u\in U,$%
\begin{equation}
f(u)=\{\Phi^{\rho}(\mu,u,\cdot)|\mu\in\phi_{0}(u),\rho\in\pi(\mu,u)\}.
\label{asr3}%
\end{equation}

b) If $\pi$ exists such that (\ref{asr2}), (\ref{asr3}) hold, then
$f\subset\Xi_{\Phi}.$
\end{theorem}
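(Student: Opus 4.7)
The plan is to prove part (a) by constructing $\pi$ from $f$ in the most natural way and verifying the two set inclusions in (\ref{asr3}); part (b) will then be an immediate unfolding of Definition \ref{Def17}.

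For part (a), I would set, for each $(\mu,u)\in\Delta$,
\[
\pi(\mu,u)=\{\rho\in P_{n}\mid\Phi^{\rho}(\mu,u,\cdot)\in f(u)\}.
\]
The first verification is that $\pi(\mu,u)$ is non-empty, so that $\pi$ really takes values in $P^{\ast}(P_{n})$. Since $(\mu,u)\in\Delta$ means $\mu\in\phi_{0}(u)$, the definition of $\phi_{0}$ gives some $x\in f(u)$ with $x(-\infty+0)=\mu$; and since $f\subset\Xi_{\Phi}$, this $x$ admits a representation $x=\Phi^{\rho'}(\mu',u,\cdot)$ for some $\mu'\in\mathbf{B}^{n}$ and $\rho'\in P_{n}$. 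Reading the initial value off Definition \ref{Def16}, where $\omega_{-1}=\mu'$, forces $\mu'=\mu$, so $\rho'\in\pi(\mu,u)$.

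I would then verify (\ref{asr3}) by double inclusion. For $\subset$: any $x\in f(u)\subset\Xi_{\Phi}(u)$ is of the form $x=\Phi^{\rho}(\mu,u,\cdot)$ with $\mu\in\mathbf{B}^{n}$ and $\rho\in P_{n}$; identifying the initial value gives $\mu=x(-\infty+0)\in\phi_{0}(u)$, and the defining condition of $\pi$ immediately yields $\rho\in\pi(\mu,u)$. For $\supset$: if $\mu\in\phi_{0}(u)$ and $\rho\in\pi(\mu,u)$, then $\Phi^{\rho}(\mu,u,\cdot)\in f(u)$ by the very definition of $\pi$. Part (b) then follows just by reading Definition \ref{Def17}: assuming (\ref{asr2}), (\ref{asr3}), every $x\in f(u)$ has the shape $\Phi^{\rho}(\mu,u,\cdot)$ with $\mu\in\mathbf{B}^{n}$, $\rho\in P_{n}$, so $x\in\Xi_{\Phi}(u)$.

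The only real obstacle in the whole argument is the non-emptiness check for $\pi(\mu,u)$, which rests on the observation that the initial value of the signal $\Phi^{\rho}(\mu,u,\cdot)$ is its first argument $\mu$; once this identification is secured, everything else is bookkeeping.
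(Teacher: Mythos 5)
Your proposal is correct and follows essentially the same route as the paper: you define $\pi(\mu,u)=\{\rho\in P_{n}\mid\Phi^{\rho}(\mu,u,\cdot)\in f(u)\}$ exactly as the paper does, and the non-emptiness and double-inclusion checks you spell out (via the observation that the initial value of $\Phi^{\rho}(\mu,u,\cdot)$ is $\omega_{-1}=\mu$) are precisely the details the paper's terser proof leaves implicit. Part b) is likewise handled the same way in both arguments.
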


\begin{proof}
a) Let $u\in U$ be arbitrary. From the fact that $f\subset\Xi_{\Phi}$ and
$f(u)\neq\emptyset$ we infer that $\forall\mu\in\phi_{0}(u),$ the set
$\{\rho|\rho\in P_{n},\Phi^{\rho}(\mu,u,\cdot)\in f(u)\}$ is non-empty. We
define $\Delta$ by equation (\ref{asr2}) and $\pi:\Delta\rightarrow P^{\ast
}(P_{n})$ by $\forall(\mu,u)\in\Delta,$%
\[
\pi(\mu,u)=\{\rho|\rho\in P_{n},\Phi^{\rho}(\mu,u,\cdot)\in f(u)\}.
\]
(\ref{asr3}) is fulfilled.

b) Obvious.
\end{proof}

\begin{definition}
For the regular system $f\subset\Xi_{\Phi},$ the function $\pi$ previously
defined is called the \textbf{computation function} of $f$.
\end{definition}

\section{Parallel connection}

\begin{definition}
Consider the systems $f^{\prime}:U^{\prime}\rightarrow P^{\ast}(S^{(n^{\prime
})}),f^{\prime\prime}:U^{\prime\prime}\rightarrow P^{\ast}(S^{(n^{\prime
\prime})}),$ $U^{\prime},U^{\prime\prime}\in P^{\ast}(S^{(m)})$ with
$U^{\prime}\cap U^{\prime\prime}\neq\emptyset.$ The system $f^{\prime
}||f^{\prime\prime}:U^{\prime}\cap U^{\prime\prime}\rightarrow P^{\ast
}(S^{(n^{\prime}+n^{\prime\prime})})$ defined by%
\[
\forall u\in U^{\prime}\cap U^{\prime\prime},(f^{\prime}||f^{\prime\prime
})(u)=f^{\prime}(u)\times f^{\prime\prime}(u)
\]
is called the \textbf{parallel connection} of the systems $f^{\prime}$ and
$f^{\prime\prime}$.
\end{definition}

\begin{remark}
The parallel connection of two systems $f^{\prime}$ and $f^{\prime\prime}$ is
the system that represents $f^{\prime},f^{\prime\prime}$ acting independently
on each other under the same input $u\in U^{\prime}\cap U^{\prime\prime}$.
\end{remark}

\section{The parallel connection of the regular systems}

\begin{notation}
Let be $\Phi^{\prime}:\mathbf{B}^{n^{\prime}}\times\mathbf{B}^{m}%
\rightarrow\mathbf{B}^{n^{\prime}}$ and $\Phi^{\prime\prime}:\mathbf{B}%
^{n^{\prime\prime}}\times\mathbf{B}^{m}\rightarrow\mathbf{B}^{n^{\prime\prime
}},$ for which we denote by $\Phi^{\prime}||\Phi^{\prime\prime}:\mathbf{B}%
^{n^{\prime}+n^{\prime\prime}}\times\mathbf{B}^{m}\rightarrow\mathbf{B}%
^{n^{\prime}+n^{\prime\prime}}$ the function $\forall((\mu^{\prime}%
,\mu^{\prime\prime}),\lambda)\in\mathbf{B}^{n^{\prime}+n^{\prime\prime}}%
\times\mathbf{B}^{m},$%
\[
(\Phi^{\prime}||\Phi^{\prime\prime})((\mu^{\prime},\mu^{\prime\prime}%
),\lambda)=(\Phi^{\prime}(\mu^{\prime},\lambda),\Phi^{\prime\prime}%
(\mu^{\prime\prime},\lambda)).
\]
\end{notation}

\begin{definition}
The function $\Phi:\mathbf{B}^{n}\times\mathbf{B}^{m}\rightarrow\mathbf{B}%
^{n}$ is given. The (\textbf{Boolean partial}) \textbf{derivative} \textbf{of}
$\Phi_{i}$ \textbf{relative to} $\mu_{j},$ $i,j\in\{1,...,n\}$ is the function
$\dfrac{\partial\Phi_{i}}{\partial\mu_{j}}:\mathbf{B}^{n}\times\mathbf{B}%
^{m}\rightarrow\mathbf{B}$ given by $\forall(\mu,\lambda)\in\mathbf{B}%
^{n}\times\mathbf{B}^{m},$%
\[
\dfrac{\partial\Phi_{i}}{\partial\mu_{j}}(\mu,\lambda)=\Phi_{i}(\mu
_{1},...,\mu_{j},...,\mu_{n},\lambda)\oplus\Phi_{i}(\mu_{1},...,\overline
{\mu_{j}},...,\mu_{n},\lambda).
\]
\end{definition}

\begin{theorem}
\label{The26}We consider the functions $\Phi^{\prime}$ and $\Phi^{\prime
\prime}$. The following statements are true:

i) $\forall(\mu,\lambda)\in\mathbf{B}^{n^{\prime}+n^{\prime\prime}}%
\times\mathbf{B}^{m},\forall i\in\{1,...,n^{\prime}\},\forall j\in\{n^{\prime
}+1,...,n^{\prime}+n^{\prime\prime}\},$%
\[
(\Phi^{\prime}||\Phi^{\prime\prime})_{i}(\mu_{1},...,\mu_{n^{\prime}}%
,...,\mu_{j},...,\mu_{n^{\prime}+n^{\prime\prime}},\lambda)=
\]%
\[
=(\Phi^{\prime}||\Phi^{\prime\prime})_{i}(\mu_{1},...,\mu_{n^{\prime}%
},...,\overline{\mu_{j}},...,\mu_{n^{\prime}+n^{\prime\prime}},\lambda),
\]
$\forall(\mu,\lambda)\in\mathbf{B}^{n^{\prime}+n^{\prime\prime}}%
\times\mathbf{B}^{m},\forall i\in\{n^{\prime}+1,...,n^{\prime}+n^{\prime
\prime}\},\forall j\in\{1,...,n^{\prime}\},$%
\[
(\Phi^{\prime}||\Phi^{\prime\prime})_{i}(\mu_{1},...,\mu_{j},...,\mu
_{n^{\prime}+1},...,\mu_{n^{\prime}+n^{\prime\prime}},\lambda)=
\]%
\[
=(\Phi^{\prime}||\Phi^{\prime\prime})_{i}(\mu_{1},...,\overline{\mu_{j}%
},...,\mu_{n^{\prime}+1},...,\mu_{n^{\prime}+n^{\prime\prime}},\lambda);
\]

ii) $\forall(\mu,\lambda)\in\mathbf{B}^{n^{\prime}+n^{\prime\prime}}%
\times\mathbf{B}^{m},$%
\[
\forall i\in\{1,...,n^{\prime}\},\forall j\in\{n^{\prime}+1,...,n^{\prime
}+n^{\prime\prime}\},\dfrac{\partial(\Phi^{\prime}||\Phi^{\prime\prime})_{i}%
}{\partial\mu_{j}}(\mu,\lambda)=0,
\]%
\[
\forall i\in\{n^{\prime}+1,...,n^{\prime}+n^{\prime\prime}\},\forall
j\in\{1,...,n^{\prime}\},\dfrac{\partial(\Phi^{\prime}||\Phi^{\prime\prime
})_{i}}{\partial\mu_{j}}(\mu,\lambda)=0;
\]

iii) $\forall(\mu,\lambda)\in\mathbf{B}^{n^{\prime}+n^{\prime\prime}}%
\times\mathbf{B}^{m},$%
\begin{equation}
(\Phi^{\prime}||\Phi^{\prime\prime})_{1}(\mu,\lambda)=\Phi_{1}^{\prime}%
(\mu_{1},...,\mu_{n^{\prime}},\lambda), \label{par3}%
\end{equation}%
\[
...
\]%
\begin{equation}
(\Phi^{\prime}||\Phi^{\prime\prime})_{n^{\prime}}(\mu,\lambda)=\Phi
_{n^{\prime}}^{\prime}(\mu_{1},...,\mu_{n^{\prime}},\lambda), \label{par4}%
\end{equation}%
\begin{equation}
(\Phi^{\prime}||\Phi^{\prime\prime})_{n^{\prime}+1}(\mu,\lambda)=\Phi
_{1}^{\prime\prime}(\mu_{n^{\prime}+1},...,\mu_{n^{\prime}+n^{\prime\prime}%
},\lambda), \label{par5}%
\end{equation}%
\[
...
\]%
\begin{equation}
(\Phi^{\prime}||\Phi^{\prime\prime})_{n^{\prime}+n^{\prime\prime}}(\mu
,\lambda)=\Phi_{n^{\prime\prime}}^{\prime\prime}(\mu_{n^{\prime}+1}%
,...,\mu_{n^{\prime}+n^{\prime\prime}},\lambda). \label{par6}%
\end{equation}
\end{theorem}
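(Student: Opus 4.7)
The plan is essentially to unwind the definition of $\Phi^{\prime}||\Phi^{\prime\prime}$ coordinate by coordinate. Since the whole content of the theorem is that $(\Phi^{\prime}||\Phi^{\prime\prime})_i$ depends only on the $\mu$-block appropriate to the system from which it comes, every assertion reduces to reading off the defining formula $(\Phi^{\prime}||\Phi^{\prime\prime})((\mu^{\prime},\mu^{\prime\prime}),\lambda)=(\Phi^{\prime}(\mu^{\prime},\lambda),\Phi^{\prime\prime}(\mu^{\prime\prime},\lambda))$, so I expect no real obstacle; the work lies mostly in organizing the indices cleanly.

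I would prove (iii) first because it sets the notation and the other two parts quote it. Writing $\mu=(\mu_1,\dots,\mu_{n^{\prime}+n^{\prime\prime}})$ and identifying $\mu^{\prime}=(\mu_1,\dots,\mu_{n^{\prime}})$, $\mu^{\prime\prime}=(\mu_{n^{\prime}+1},\dots,\mu_{n^{\prime}+n^{\prime\prime}})$ (per the identification justified in the Cartesian-product remark following Definition \ref{Def5}), I would simply project the defining equality $(\Phi^{\prime}||\Phi^{\prime\prime})(\mu,\lambda)=(\Phi^{\prime}(\mu^{\prime},\lambda),\Phi^{\prime\prime}(\mu^{\prime\prime},\lambda))$ onto each coordinate. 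This immediately gives (\ref{par3})--(\ref{par6}): for $i\in\{1,\dots,n^{\prime}\}$ the $i$-th coordinate is $\Phi^{\prime}_i(\mu_1,\dots,\mu_{n^{\prime}},\lambda)$, and for $i\in\{n^{\prime}+1,\dots,n^{\prime}+n^{\prime\prime}\}$ it is $\Phi^{\prime\prime}_{i-n^{\prime}}(\mu_{n^{\prime}+1},\dots,\mu_{n^{\prime}+n^{\prime\prime}},\lambda)$.

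Next I would deduce (i) from (iii). Fix $i\in\{1,\dots,n^{\prime}\}$ and $j\in\{n^{\prime}+1,\dots,n^{\prime}+n^{\prime\prime}\}$. By (iii) the value $(\Phi^{\prime}||\Phi^{\prime\prime})_i(\mu,\lambda)=\Phi^{\prime}_i(\mu_1,\dots,\mu_{n^{\prime}},\lambda)$ does not involve $\mu_j$ at all, so replacing $\mu_j$ by $\overline{\mu_j}$ leaves the expression unchanged; the symmetric case $i>n^{\prime}$, $j\le n^{\prime}$ is identical after swapping roles. This is the first invariance claim of the theorem.

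Finally, (ii) is an immediate consequence of (i) and the definition of the Boolean partial derivative. Given any $i,j$ in the relevant ranges, the definition of $\partial(\Phi^{\prime}||\Phi^{\prime\prime})_i/\partial\mu_j$ is the XOR of $(\Phi^{\prime}||\Phi^{\prime\prime})_i$ evaluated at $\mu$ and at the tuple obtained from $\mu$ by flipping $\mu_j$; by (i) these two values coincide, so their XOR is $0$. The only mildly delicate point in the whole argument is the notational identification of $\mathbf{B}^{n^{\prime}}\times\mathbf{B}^{n^{\prime\prime}}$ with $\mathbf{B}^{n^{\prime}+n^{\prime\prime}}$, but this is exactly what the preliminary remark after Definition \ref{Def5} authorizes, so nothing further is needed.
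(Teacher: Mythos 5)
Your proposal is correct and follows essentially the same route as the paper: both arguments simply unwind the defining formula $(\Phi^{\prime}||\Phi^{\prime\prime})((\mu^{\prime},\mu^{\prime\prime}),\lambda)=(\Phi^{\prime}(\mu^{\prime},\lambda),\Phi^{\prime\prime}(\mu^{\prime\prime},\lambda))$ coordinate by coordinate, observe that each coordinate ignores the other block of $\mu$ (giving i and iii), and obtain ii as the XOR of two equal values. The only difference is cosmetic ordering --- you establish iii first and derive i and ii from it, while the paper verifies i, ii and the identities (\ref{par3})--(\ref{par4}) in one pass and treats the symmetric case as implicit.
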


\begin{proof}
$\forall(\mu,\lambda)\in\mathbf{B}^{n^{\prime}+n^{\prime\prime}}%
\times\mathbf{B}^{m},\forall i\in\{1,...,n^{\prime}\},\forall j\in\{n^{\prime
}+1,...,n^{\prime}+n^{\prime\prime}\},$%
\[
(\Phi^{\prime}||\Phi^{\prime\prime})_{i}(\mu_{1},...,\mu_{n^{\prime}}%
,...,\mu_{j},...,\mu_{n^{\prime}+n^{\prime\prime}},\lambda)=\Phi_{i}^{\prime
}(\mu_{1},...,\mu_{n^{\prime}},\lambda)=
\]%
\[
=(\Phi^{\prime}||\Phi^{\prime\prime})_{i}(\mu_{1},...,\mu_{n^{\prime}%
},...,\overline{\mu_{j}},...,\mu_{n^{\prime}+n^{\prime\prime}},\lambda)
\]
i.e. the first part of i) holds true or, equivalently:%
\[
\dfrac{\partial(\Phi^{\prime}||\Phi^{\prime\prime})_{i}}{\partial\mu_{j}}%
(\mu,\lambda)=\Phi_{i}^{\prime}(\mu_{1},...,\mu_{n^{\prime}},\lambda
)\oplus\Phi_{i}^{\prime}(\mu_{1},...,\mu_{n^{\prime}},\lambda)=0
\]
i.e. the first part of ii) holds true. When $i$ runs in $\{1,...,n^{\prime
}\},$ (\ref{par3}),...,(\ref{par4}) are true themselves.
\end{proof}

\begin{lemma}
\label{Lem1}For any $\rho^{\prime}\in P_{n^{\prime}},\rho^{\prime\prime}\in
P_{n^{\prime\prime}},$ the function $\rho^{\prime}\times\rho^{\prime\prime
}:\mathbf{R}\rightarrow\mathbf{B}^{n^{\prime}+n^{\prime\prime}}$ is
progressive and it belongs to $P_{n^{\prime}+n^{\prime\prime}}$.
\end{lemma}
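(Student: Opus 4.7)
The plan is to build explicitly a progressive representation of $\rho'\times\rho''$ by merging the carrier sequences of $\rho'$ and $\rho''$. Start by writing, in accordance with Definition \ref{Def14},
\[
\rho'(t)=\alpha'^{0}\cdot\chi_{\{t'_{0}\}}(t)\oplus\alpha'^{1}\cdot\chi_{\{t'_{1}\}}(t)\oplus\ldots,\qquad
\rho''(t)=\alpha''^{0}\cdot\chi_{\{t''_{0}\}}(t)\oplus\alpha''^{1}\cdot\chi_{\{t''_{1}\}}(t)\oplus\ldots
\]
with $\alpha'\in\Pi_{n'}$, $\alpha''\in\Pi_{n''}$, $(t'_{k}),(t''_{k})\in Seq$. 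Because both underlying sets $\{t'_{k}\mid k\in\mathbf{N}\}$ and $\{t''_{k}\mid k\in\mathbf{N}\}$ are strictly increasing and unbounded above, their union is still unbounded above and, being countable, admits a unique strictly increasing enumeration $(t_{k})\in Seq$.

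Next, define the binary sequence $\beta:\mathbf{N}\rightarrow\mathbf{B}^{n'+n''}$ coordinate-wise as follows: for each $k\in\mathbf{N}$, set the first $n'$ components of $\beta^{k}$ equal to $\alpha'^{j}$ when $t_{k}=t'_{j}$ for some (necessarily unique) $j$, and equal to $0\in\mathbf{B}^{n'}$ otherwise; set the last $n''$ components equal to $\alpha''^{l}$ when $t_{k}=t''_{l}$ and to $0\in\mathbf{B}^{n''}$ otherwise. With this choice, verifying the identity
\[
(\rho'\times\rho'')(t)=\beta^{0}\cdot\chi_{\{t_{0}\}}(t)\oplus\beta^{1}\cdot\chi_{\{t_{1}\}}(t)\oplus\ldots
\]
reduces to a case analysis according to whether $t\in\{t'_{k}\}\cap\{t''_{k}\}$, $t$ belongs to exactly one of the two sets, or $t$ is outside their union, using Definition \ref{Def6} of the Cartesian product of functions and the fact that both $\rho'$ and $\rho''$ vanish off their respective carriers.

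Finally I would check $\beta\in\Pi_{n'+n''}$. Fix $i\in\{1,\ldots,n'\}$. Since $\alpha'\in\Pi_{n'}$, the set $J_{i}=\{j\mid\alpha'^{j}_{i}=1\}$ is infinite, so the indices $k$ with $t_{k}\in\{t'_{j}\mid j\in J_{i}\}$ form an infinite set on which $\beta^{k}_{i}=1$. The analogous argument with $\alpha''\in\Pi_{n''}$ handles $i\in\{n'+1,\ldots,n'+n''\}$. Hence $\beta\in\Pi_{n'+n''}$ and $\rho'\times\rho''\in P_{n'+n''}$.

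The argument is essentially bookkeeping; the only place where one must be a bit careful is the merging step, where one has to note that the union of two sequences in $Seq$ need not be the concatenation or the componentwise maximum but the strictly increasing enumeration of their union, and that the assignment of $\alpha'^{j}$ (resp.\ $\alpha''^{l}$) to the correct index $k$ must respect the collisions $t'_{j}=t''_{l}$. This is the main, though minor, obstacle.
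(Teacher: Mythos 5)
Your proof is correct and follows essentially the same route as the paper's: the paper simply assumes ``without losing the generality'' that $\rho'$ and $\rho''$ share a common carrier sequence $(t_k)$ and then pairs the coefficients coordinatewise, which is exactly the reduction your explicit merging step justifies. One small point of justification: the union of the two carriers admits a strictly increasing unbounded enumeration not merely because it is countable ($\mathbf{Q}$ is countable and unbounded above yet has no such enumeration), but because every sequence in $Seq$ tends to $+\infty$, so the union contains only finitely many points below any given bound.
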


\begin{proof}
We presume (without loosing the generality) that%
\begin{equation}
\rho^{\prime}(t)=\alpha^{\prime0}\cdot\chi_{\{t_{0}\}}(t)\oplus...\oplus
\alpha^{\prime k}\cdot\chi_{\{t_{k}\}}(t)\oplus... \label{par1}%
\end{equation}%
\begin{equation}
\rho^{\prime\prime}(t)=\alpha^{\prime\prime0}\cdot\chi_{\{t_{0}\}}%
(t)\oplus...\oplus\alpha^{\prime\prime k}\cdot\chi_{\{t_{k}\}}(t)\oplus...
\label{par2}%
\end{equation}
with $\alpha^{\prime}\in\Pi_{n^{\prime}},\alpha^{\prime\prime}\in
\Pi_{n^{\prime\prime}}$ and $(t_{k})\in Seq.$ We infer%
\[
(\rho^{\prime}\times\rho^{\prime\prime})(t)=(\rho^{\prime}(t),\rho
^{\prime\prime}(t))=(\alpha^{\prime0},\alpha^{\prime\prime0})\cdot
\chi_{\{t_{0}\}}(t)\oplus...\oplus(\alpha^{\prime k},\alpha^{\prime\prime
k})\cdot\chi_{\{t_{k}\}}(t)\oplus...
\]
The function $\rho^{\prime}\times\rho^{\prime\prime}$ is progressive because
$\forall i\in\{1,...,n^{\prime}\},$ the set
\[
\{k|k\in\mathbf{N},(\alpha^{\prime k},\alpha^{\prime\prime k})_{i}%
=1\}=\{k|k\in\mathbf{N},\alpha_{i}^{\prime k}=1\}
\]
is infinite and $\forall i\in\{n^{\prime}+1,...,n^{\prime}+n^{\prime\prime
}\},$ the set%
\[
\{k|k\in\mathbf{N},(\alpha^{\prime k},\alpha^{\prime\prime k})_{i}%
=1\}=\{k|k\in\mathbf{N},\alpha_{i}^{\prime\prime k}=1\}
\]
is infinite too.
\end{proof}

\begin{theorem}
\label{The27}The functions $\Phi^{\prime}:\mathbf{B}^{n^{\prime}}%
\times\mathbf{B}^{m}\rightarrow\mathbf{B}^{n^{\prime}}$, $\Phi^{\prime\prime
}:\mathbf{B}^{n^{\prime\prime}}\times\mathbf{B}^{m}\rightarrow\mathbf{B}%
^{n^{\prime\prime}}$ are given. For any $\rho^{\prime}\in P_{n^{\prime}},$
$\rho^{\prime\prime}\in P_{n^{\prime\prime}},$ $\mu^{\prime}\in\mathbf{B}%
^{n^{\prime}},$ $\mu^{\prime\prime}\in\mathbf{B}^{n^{\prime\prime}},$ $u\in
S^{(m)}$ and any $t\in\mathbf{R}$ we have%
\[
(\Phi^{\prime}||\Phi^{\prime\prime})^{\rho^{\prime}\times\rho^{\prime\prime}%
}((\mu^{\prime},\mu^{\prime\prime}),u,t)=(\Phi^{\prime\rho^{\prime}}%
(\mu^{\prime},u,t),\Phi^{\prime\prime\rho^{\prime\prime}}(\mu^{\prime\prime
},u,t)).
\]
\end{theorem}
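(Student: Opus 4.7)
The plan is to reduce the identity to a stepwise equality between the $\omega$-sequences produced by Definition \ref{Def16} when it is applied to the three generator functions $\Phi^{\prime}$, $\Phi^{\prime\prime}$ and $\Phi^{\prime}||\Phi^{\prime\prime}$, and then to prove that equality by induction on the step index $k$. The algebraic heart of the argument is Theorem \ref{The26} iii), which decouples the coordinates of $\Phi^{\prime}||\Phi^{\prime\prime}$ into the corresponding coordinates of $\Phi^{\prime}$ and $\Phi^{\prime\prime}$.

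First, in the spirit of the proof of Lemma \ref{Lem1}, I would write $\rho^{\prime}$ and $\rho^{\prime\prime}$ over a common underlying sequence $(t_{k})\in Seq$ (if necessary, refine each of them by inserting zero vectors at the extra time moments; progressiveness is preserved because every coordinate still carries infinitely many $1$'s). By Lemma \ref{Lem1}, the product $\rho^{\prime}\times\rho^{\prime\prime}$ is then progressive and presented over the same sequence $(t_{k})$ by the family $(\alpha^{\prime k},\alpha^{\prime\prime k})\in\Pi_{n^{\prime}+n^{\prime\prime}}$. Consequently all three signals from the statement are piecewise constant on the same intervals $(-\infty,t_{0}),[t_{k},t_{k+1})$, so it is enough to prove the equality of values on each such interval.

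Next, I would denote by $\omega_{k}^{\prime}$, $\omega_{k}^{\prime\prime}$, $\omega_{k}$ the values of $\Phi^{\prime\rho^{\prime}}(\mu^{\prime},u,\cdot)$, $\Phi^{\prime\prime\rho^{\prime\prime}}(\mu^{\prime\prime},u,\cdot)$, $(\Phi^{\prime}||\Phi^{\prime\prime})^{\rho^{\prime}\times\rho^{\prime\prime}}((\mu^{\prime},\mu^{\prime\prime}),u,\cdot)$ on $[t_{k},t_{k+1})$ (with $k=-1$ for the $(-\infty,t_{0})$ piece), as produced by Definition \ref{Def16}, and prove by induction on $k\geq-1$ that
\[
\omega_{k}=(\omega_{k}^{\prime},\omega_{k}^{\prime\prime}).
\]
The base case $k=-1$ is immediate from the prescribed initial values. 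For the inductive step, I would expand $\omega_{k+1}=(\Phi^{\prime}||\Phi^{\prime\prime})^{(\alpha^{\prime k+1},\alpha^{\prime\prime k+1})}(\omega_{k},u(t_{k+1}))$ coordinatewise using Definition \ref{Def15}: for $i\in\{1,\ldots,n^{\prime}\}$ the $i$-th coordinate reads
\[
\overline{\alpha_{i}^{\prime k+1}}\cdot(\omega_{k})_{i}\oplus\alpha_{i}^{\prime k+1}\cdot(\Phi^{\prime}||\Phi^{\prime\prime})_{i}(\omega_{k},u(t_{k+1})),
\]
and by the induction hypothesis together with Theorem \ref{The26} iii) (equations (\ref{par3})--(\ref{par4})) this collapses to $\overline{\alpha_{i}^{\prime k+1}}\cdot(\omega_{k}^{\prime})_{i}\oplus\alpha_{i}^{\prime k+1}\cdot\Phi_{i}^{\prime}(\omega_{k}^{\prime},u(t_{k+1}))=(\omega_{k+1}^{\prime})_{i}$. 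The symmetric computation for $i\in\{n^{\prime}+1,\ldots,n^{\prime}+n^{\prime\prime}\}$ uses (\ref{par5})--(\ref{par6}) and closes the induction.

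Finally, gluing the coordinatewise equalities $\omega_{k}=(\omega_{k}^{\prime},\omega_{k}^{\prime\prime})$ over all $k\geq-1$ together with the identification of $\mathbf{B}^{n^{\prime}}\times\mathbf{B}^{n^{\prime\prime}}$ with $\mathbf{B}^{n^{\prime}+n^{\prime\prime}}$ (already used in Definition \ref{Def6}) yields the claimed equality of signals at every $t\in\mathbf{R}$. I do not expect any serious obstacle: this is essentially a bookkeeping argument. The only point that needs care is the very first reduction to a common sequence $(t_{k})$, without which the recursions in Definition \ref{Def16} are indexed differently for $\Phi^{\prime\rho^{\prime}}$ and $\Phi^{\prime\prime\rho^{\prime\prime}}$ and cannot be directly compared with the one for $(\Phi^{\prime}||\Phi^{\prime\prime})^{\rho^{\prime}\times\rho^{\prime\prime}}$.
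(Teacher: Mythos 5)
Your proposal is correct and follows essentially the same route as the paper: both reduce $\rho^{\prime}$ and $\rho^{\prime\prime}$ to a common sequence $(t_{k})$, invoke Lemma \ref{Lem1}, and then identify the recursion of Definition \ref{Def16} for $(\Phi^{\prime}||\Phi^{\prime\prime})^{\rho^{\prime}\times\rho^{\prime\prime}}$ with the pair of recursions for $\Phi^{\prime\rho^{\prime}}$ and $\Phi^{\prime\prime\rho^{\prime\prime}}$ via $(\Phi^{\prime}||\Phi^{\prime\prime})^{(\alpha^{\prime k+1},\alpha^{\prime\prime k+1})}=\Phi^{\prime\alpha^{\prime k+1}}||\Phi^{\prime\prime\alpha^{\prime\prime k+1}}$, which you unpack coordinatewise from Definition \ref{Def15} and Theorem \ref{The26} iii) while the paper states it directly. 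Your explicit treatment of the common-refinement step (inserting zero vectors) is a slightly more careful rendering of what the paper dispatches with ``without loss of generality.''
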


\begin{proof}
We suppose that $\rho^{\prime},\rho^{\prime\prime}$ are like at (\ref{par1}),
(\ref{par2}) and Lemma \ref{Lem1} shows that $\rho^{\prime}\times\rho
^{\prime\prime}$ is a progressive function from $P_{n^{\prime}+n^{\prime
\prime}}$, thus the statement of the Theorem makes sense.

For $\rho^{\prime}\in P_{n^{\prime}},$ $\rho^{\prime\prime}\in P_{n^{\prime
\prime}},$ $\mu^{\prime}\in\mathbf{B}^{n^{\prime}},$ $\mu^{\prime\prime}%
\in\mathbf{B}^{n^{\prime\prime}},$ $u\in S^{(m)}$ and $t\in\mathbf{R}$ we can
write%
\[
(\Phi^{\prime}||\Phi^{\prime\prime})^{\rho^{\prime}\times\rho^{\prime\prime}%
}((\mu^{\prime},\mu^{\prime\prime}),u,t)=\omega_{-1}\cdot\chi_{(-\infty
,t_{0})}(t)
\]%
\[
\oplus\omega_{0}\cdot\chi_{\lbrack t_{0},t_{1})}(t)\oplus...\oplus\omega
_{k}\cdot\chi_{\lbrack t_{k},t_{k+1})}(t)\oplus...
\]
where $\forall k\in\mathbf{N}\cup\{-1\},\omega_{k}\in\mathbf{B}^{n^{\prime
}+n^{\prime\prime}}$ and%
\[
\omega_{-1}=(\mu^{\prime},\mu^{\prime\prime}),
\]%
\[
\omega_{k+1}=(\Phi^{\prime}||\Phi^{\prime\prime})^{(\alpha^{\prime k+1}%
,\alpha^{\prime\prime k+1})}(\omega_{k},u(t_{k+1})).
\]
We denote by $\omega_{k}^{\prime}\in\mathbf{B}^{n^{\prime}},\omega_{k}%
^{\prime\prime}\in\mathbf{B}^{n^{\prime\prime}}$ the first $n^{\prime}$
coordinates and the last $n^{\prime\prime}$ coordinates of $\omega_{k}$ and we
remark that $\forall k\in\mathbf{N}\cup\{-1\},$%
\[
(\omega_{-1}^{\prime},\omega_{-1}^{\prime\prime})=(\mu^{\prime},\mu
^{\prime\prime}),
\]%
\[
(\omega_{k+1}^{\prime},\omega_{k+1}^{\prime\prime})=(\Phi^{\prime}%
||\Phi^{\prime\prime})^{(\alpha^{\prime k+1},\alpha^{\prime\prime k+1}%
)}((\omega_{k}^{\prime},\omega_{k}^{\prime\prime}),u(t_{k+1}))=
\]%
\[
=(\Phi^{\prime\alpha^{\prime k+1}}||\Phi^{\prime\prime\alpha^{\prime\prime
k+1}})((\omega_{k}^{\prime},\omega_{k}^{\prime\prime}),u(t_{k+1}))=
\]%
\[
=(\Phi^{\prime\alpha^{\prime k+1}}(\omega_{k}^{\prime},u(t_{k+1}%
)),\Phi^{\prime\prime\alpha^{\prime\prime k+1}}(\omega_{k}^{\prime\prime
},u(t_{k+1})))
\]
thus $(\omega_{k}^{\prime}),(\omega_{k}^{\prime\prime})$ fulfill%
\[
\Phi^{\prime\rho^{\prime}}(\mu^{\prime},u,t)=\omega_{-1}^{\prime}\cdot
\chi_{(-\infty,t_{0})}(t)\oplus\omega_{0}^{\prime}\cdot\chi_{\lbrack
t_{0},t_{1})}(t)\oplus...\oplus\omega_{k}^{\prime}\cdot\chi_{\lbrack
t_{k},t_{k+1})}(t)\oplus...
\]%
\[
\Phi^{\prime\prime\rho^{\prime\prime}}(\mu^{\prime\prime},u,t)=\omega
_{-1}^{\prime\prime}\cdot\chi_{(-\infty,t_{0})}(t)\oplus\omega_{0}%
^{\prime\prime}\cdot\chi_{\lbrack t_{0},t_{1})}(t)\oplus...\oplus\omega
_{k}^{\prime\prime}\cdot\chi_{\lbrack t_{k},t_{k+1})}(t)\oplus...
\]
The conclusion is that%
\[
(\Phi^{\prime}||\Phi^{\prime\prime})^{\rho^{\prime}\times\rho^{\prime\prime}%
}((\mu^{\prime},\mu^{\prime\prime}),u,t)=
\]%
\[
=(\omega_{-1}^{\prime},\omega_{-1}^{\prime\prime})\cdot\chi_{(-\infty,t_{0}%
)}(t)\oplus(\omega_{0}^{\prime},\omega_{0}^{\prime\prime})\cdot\chi_{\lbrack
t_{0},t_{1})}(t)\oplus...\oplus(\omega_{k}^{\prime},\omega_{k}^{\prime\prime
})\cdot\chi_{\lbrack t_{k},t_{k+1})}(t)\oplus...=
\]%
\[
=(\Phi^{\prime\rho^{\prime}}(\mu^{\prime},u,t),\Phi^{\prime\prime\rho
^{\prime\prime}}(\mu^{\prime\prime},u,t)).
\]
\end{proof}

\begin{notation}
We suppose that the systems $f^{\prime}:U^{\prime}\rightarrow P^{\ast
}(S^{(n^{\prime})}),f^{\prime\prime}:U^{\prime\prime}\rightarrow P^{\ast
}(S^{(n^{\prime\prime})}),U^{\prime},U^{\prime\prime}\in P^{\ast}(S^{(m)})$
are regular i.e. $f^{\prime}\subset\Xi_{\Phi^{\prime}},f^{\prime\prime}%
\subset\Xi_{\Phi^{\prime\prime}}.$ Let $\phi_{0}^{\prime}:U^{\prime
}\rightarrow P^{\ast}(\mathbf{B}^{n^{\prime}}),$ $\phi_{0}^{\prime\prime
}:U^{\prime\prime}\rightarrow P^{\ast}(\mathbf{B}^{n^{\prime\prime}})$ be
their initial state functions and $\pi^{\prime}:\Delta^{\prime}\rightarrow
P^{\ast}(P_{n^{\prime}}),$ $\pi^{\prime\prime}:\Delta^{\prime\prime
}\rightarrow P^{\ast}(P_{n^{\prime\prime}})$ be their computation functions,%
\[
\Delta^{\prime}=\{(\mu^{\prime},u^{\prime})|u^{\prime}\in U^{\prime}%
,\mu^{\prime}\in\phi_{0}^{\prime}(u^{\prime})\},
\]%
\[
\Delta^{\prime\prime}=\{(\mu^{\prime\prime},u^{\prime\prime})|u^{\prime\prime
}\in U^{\prime\prime},\mu^{\prime\prime}\in\phi_{0}^{\prime\prime}%
(u^{\prime\prime})\}.
\]
If $U^{\prime}\cap U^{\prime\prime}\neq\emptyset,$ then we use the notations
$\phi_{||}:U^{\prime}\cap U^{\prime\prime}\rightarrow P^{\ast}(\mathbf{B}%
^{n^{\prime}+n^{\prime\prime}}),$ $\pi_{||}:\Delta^{\prime}||\Delta
^{\prime\prime}\rightarrow P^{\ast}(P_{n^{\prime}+n^{\prime\prime}})$ for the
functions $\forall u\in$ $U^{\prime}\cap U^{\prime\prime},$%
\[
\phi_{||}(u)=\phi_{0}^{\prime}(u)\times\phi_{0}^{\prime\prime}(u)
\]
and respectively%
\[
\Delta^{\prime}||\Delta^{\prime\prime}=\{((\mu^{\prime},\mu^{\prime\prime
}),u)|u\in U^{\prime}\cap U^{\prime\prime},\mu^{\prime}\in\phi_{0}^{\prime
}(u),\mu^{\prime\prime}\in\phi_{0}^{\prime\prime}(u)\},
\]
$\forall((\mu^{\prime},\mu^{\prime\prime}),u)\in\Delta^{\prime}||\Delta
^{\prime\prime},$%
\[
\pi_{||}((\mu^{\prime},\mu^{\prime\prime}),u)=\pi^{\prime}(\mu^{\prime
},u)\times\pi^{\prime\prime}(\mu^{\prime\prime},u).
\]
\end{notation}

\begin{theorem}
If $f^{\prime}\subset\Xi_{\Phi^{\prime}},$ $f^{\prime\prime}\subset\Xi
_{\Phi^{\prime\prime}}$ and $U^{\prime}\cap U^{\prime\prime}\neq\emptyset,$
then $f^{\prime}||f^{\prime\prime}\subset\Xi_{\Phi^{\prime}||\Phi
^{\prime\prime}},$ its initial state function is $\phi_{||}$ and its
computation function is $\pi_{||}.$
\end{theorem}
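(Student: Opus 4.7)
The plan is to build the representation of $(f^{\prime}||f^{\prime\prime})(u)$ in the canonical form of Theorem \ref{The19}(a) by applying that theorem to $f^{\prime}$ and $f^{\prime\prime}$ separately and then using Theorem \ref{The27} to collapse the resulting pair of evolutions into a single evolution of $\Phi^{\prime}||\Phi^{\prime\prime}$. The converse Theorem \ref{The19}(b) then yields regularity, while the initial state and computation functions can be read off the representation.

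First I would fix $u\in U^{\prime}\cap U^{\prime\prime}$ and apply Theorem \ref{The19}(a) to $f^{\prime}$ and $f^{\prime\prime}$ to obtain $f^{\prime}(u)=\{\Phi^{\prime\rho^{\prime}}(\mu^{\prime},u,\cdot)|\mu^{\prime}\in\phi_{0}^{\prime}(u),\rho^{\prime}\in\pi^{\prime}(\mu^{\prime},u)\}$ and the analogous equality for $f^{\prime\prime}(u)$. From the definition of the parallel connection together with Definition \ref{Def5}, $(f^{\prime}||f^{\prime\prime})(u)$ becomes the set of pairs $(\Phi^{\prime\rho^{\prime}}(\mu^{\prime},u,\cdot),\Phi^{\prime\prime\rho^{\prime\prime}}(\mu^{\prime\prime},u,\cdot))$ with $\mu^{\prime}\in\phi_{0}^{\prime}(u)$, $\mu^{\prime\prime}\in\phi_{0}^{\prime\prime}(u)$, $\rho^{\prime}\in\pi^{\prime}(\mu^{\prime},u)$, $\rho^{\prime\prime}\in\pi^{\prime\prime}(\mu^{\prime\prime},u)$. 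Theorem \ref{The27} rewrites each such pair as $(\Phi^{\prime}||\Phi^{\prime\prime})^{\rho^{\prime}\times\rho^{\prime\prime}}((\mu^{\prime},\mu^{\prime\prime}),u,\cdot)$, and by the very definitions of $\phi_{||}$ and $\pi_{||}$ this identifies $(f^{\prime}||f^{\prime\prime})(u)$ with $\{(\Phi^{\prime}||\Phi^{\prime\prime})^{\rho}(\nu,u,\cdot)|\nu\in\phi_{||}(u),\rho\in\pi_{||}(\nu,u)\}$, so Theorem \ref{The19}(b) gives $f^{\prime}||f^{\prime\prime}\subset\Xi_{\Phi^{\prime}||\Phi^{\prime\prime}}$. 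The claim about the initial state function follows by reading off the initial values in this representation: each state of $(f^{\prime}||f^{\prime\prime})(u)$ has initial value $(\mu^{\prime},\mu^{\prime\prime})$ ranging over $\phi_{0}^{\prime}(u)\times\phi_{0}^{\prime\prime}(u)=\phi_{||}(u)$.

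The identification of the computation function is the main obstacle, because the computation function constructed in the proof of Theorem \ref{The19}(a) is the maximal one, $\pi(\nu,u)=\{\rho\in P_{n^{\prime}+n^{\prime\prime}}|(\Phi^{\prime}||\Phi^{\prime\prime})^{\rho}(\nu,u,\cdot)\in(f^{\prime}||f^{\prime\prime})(u)\}$, and I must verify it coincides with $\pi_{||}(\nu,u)=\pi^{\prime}(\mu^{\prime},u)\times\pi^{\prime\prime}(\mu^{\prime\prime},u)$. The inclusion $\pi_{||}(\nu,u)\subset\pi(\nu,u)$ is immediate from Lemma \ref{Lem1} and Theorem \ref{The27}. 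For the reverse inclusion I need the converse of Lemma \ref{Lem1}: any $\rho\in P_{n^{\prime}+n^{\prime\prime}}$ of the form $\alpha^{0}\cdot\chi_{\{t_{0}\}}\oplus\alpha^{1}\cdot\chi_{\{t_{1}\}}\oplus...$ splits as $\rho=\rho^{\prime}\times\rho^{\prime\prime}$, where $\rho^{\prime}\in P_{n^{\prime}}$ and $\rho^{\prime\prime}\in P_{n^{\prime\prime}}$ are obtained by keeping the same time sequence $(t_{k})$ and taking the first $n^{\prime}$, respectively the last $n^{\prime\prime}$, coordinates of each $\alpha^{k}$. Progressivity of the two resulting sequences is immediate from progressivity of $\alpha$, and a second application of Theorem \ref{The27} converts $(\Phi^{\prime}||\Phi^{\prime\prime})^{\rho}(\nu,u,\cdot)\in f^{\prime}(u)\times f^{\prime\prime}(u)$ into the conjunction $\rho^{\prime}\in\pi^{\prime}(\mu^{\prime},u)$ and $\rho^{\prime\prime}\in\pi^{\prime\prime}(\mu^{\prime\prime},u)$, which is exactly $\rho\in\pi_{||}(\nu,u)$.
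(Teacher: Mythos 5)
Your proof is correct and follows essentially the same route as the paper for the main assertion: represent $f^{\prime}(u)$ and $f^{\prime\prime}(u)$ via Theorem~\ref{The19}~a), form the Cartesian product according to Definition~\ref{Def5}, collapse each pair $(\Phi^{\prime\rho^{\prime}}(\mu^{\prime},u,\cdot),\Phi^{\prime\prime\rho^{\prime\prime}}(\mu^{\prime\prime},u,\cdot))$ into a single evolution of $\Phi^{\prime}||\Phi^{\prime\prime}$ via Lemma~\ref{Lem1} and Theorem~\ref{The27}, and conclude with Theorem~\ref{The19}~b); the computation of the initial state function is also the same.

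Where you genuinely go beyond the paper is the identification of the computation function. The paper stops after exhibiting the representation $(f^{\prime}||f^{\prime\prime})(u)=\{(\Phi^{\prime}||\Phi^{\prime\prime})^{\rho^{\prime}\times\rho^{\prime\prime}}((\mu^{\prime},\mu^{\prime\prime}),u,\cdot)\,|\,(\mu^{\prime},\mu^{\prime\prime})\in\phi_{||}(u),\rho^{\prime}\times\rho^{\prime\prime}\in\pi_{||}((\mu^{\prime},\mu^{\prime\prime}),u)\}$, which only shows that $\pi_{||}$ satisfies (\ref{asr3}). Since the computation function is, by the paper's definition, the maximal $\pi$ constructed in the proof of Theorem~\ref{The19}~a), the claim that it \emph{equals} $\pi_{||}$ also requires the reverse inclusion $\pi((\mu^{\prime},\mu^{\prime\prime}),u)\subset\pi_{||}((\mu^{\prime},\mu^{\prime\prime}),u)$. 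Your splitting of an arbitrary $\rho\in P_{n^{\prime}+n^{\prime\prime}}$ as $\rho^{\prime}\times\rho^{\prime\prime}$ (the easy converse of Lemma~\ref{Lem1}), followed by a second application of Theorem~\ref{The27} and the maximality of $\pi^{\prime}$ and $\pi^{\prime\prime}$, supplies exactly this missing half, and the argument is sound. So your proof is the more complete of the two on this point; the rest coincides with the paper's.
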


\begin{proof}
We prove first that the initial state function of $f^{\prime}\times
f^{\prime\prime}$ is $\phi_{||}:$ $\forall u\in U^{\prime}\cap U^{\prime
\prime},$%
\[
\{z(-\infty+0)|z\in(f^{\prime}||f^{\prime\prime})(u)\}=
\]%
\[
=\{(x^{\prime}(-\infty+0),x^{\prime\prime}(-\infty+0))|(x^{\prime}%
,x^{\prime\prime})\in f^{\prime}(u)\times f^{\prime\prime}(u)\}=
\]%
\[
=\{(x^{\prime}(-\infty+0),x^{\prime\prime}(-\infty+0))|x^{\prime}\in
f^{\prime}(u),x^{\prime\prime}\in f^{\prime\prime}(u)\}=
\]%
\[
=\{x^{\prime}(-\infty+0)|x^{\prime}\in f^{\prime}(u)\}\times\{x^{\prime\prime
}(-\infty+0)|x^{\prime\prime}\in f^{\prime\prime}(u)\}=
\]%
\[
=\phi_{0}^{\prime}(u)\times\phi_{0}^{\prime\prime}(u)=\phi_{||}(u).
\]

We infer $\forall u\in U^{\prime}\cap U^{\prime\prime},$%
\[
(f^{\prime}||f^{\prime\prime})(u)=f^{\prime}(u)\times f^{\prime\prime}(u)=
\]%
\[
=\{\Phi^{\prime\rho^{\prime}}(\mu^{\prime},u,\cdot)|\mu^{\prime}\in\phi
_{0}^{\prime}(u),\rho^{\prime}\in\pi^{\prime}(\mu^{\prime},u)\}\times
\]%
\[
\times\{\Phi^{\prime\prime\rho^{\prime\prime}}(\mu^{\prime\prime},u,\cdot
)|\mu^{\prime\prime}\in\phi_{0}^{\prime\prime}(u),\rho^{\prime\prime}\in
\pi^{\prime\prime}(\mu^{\prime\prime},u)\}=
\]%
\[
=\{(\Phi^{\prime\rho^{\prime}}(\mu^{\prime},u,\cdot),\Phi^{\prime\prime
\rho^{\prime\prime}}(\mu^{\prime\prime},u,\cdot))|
\]%
\[
\mu^{\prime}\in\phi_{0}^{\prime}(u),\mu^{\prime\prime}\in\phi_{0}%
^{\prime\prime}(u),\rho^{\prime}\in\pi^{\prime}(\mu^{\prime},u),\rho
^{\prime\prime}\in\pi^{\prime\prime}(\mu^{\prime\prime},u)\}=
\]%
\[
=\{(\Phi^{\prime\rho^{\prime}}(\mu^{\prime},u,\cdot),\Phi^{\prime\prime
\rho^{\prime\prime}}(\mu^{\prime\prime},u,\cdot))|
\]%
\[
(\mu^{\prime},\mu^{\prime\prime})\in\phi_{0}^{\prime}(u)\times\phi_{0}%
^{\prime\prime}(u),\rho^{\prime}\times\rho^{\prime\prime}\in\pi^{\prime}%
(\mu^{\prime},u)\times\pi^{\prime\prime}(\mu^{\prime\prime},u)\}=
\]%
\[
=\{(\Phi^{\prime\rho^{\prime}}(\mu^{\prime},u,\cdot),\Phi^{\prime\prime
\rho^{\prime\prime}}(\mu^{\prime\prime},u,\cdot))|
\]%
\[
(\mu^{\prime},\mu^{\prime\prime})\in\phi_{||}(u),\rho^{\prime}\times
\rho^{\prime\prime}\in\pi_{||}((\mu^{\prime},\mu^{\prime\prime}),u)\}=
\]%
\[
\overset{\text{Theorem \ref{The27}}}{=}\{(\Phi^{\prime}||\Phi^{\prime\prime
})^{\rho^{\prime}\times\rho^{\prime\prime}}((\mu^{\prime},\mu^{\prime\prime
}),u,\cdot)|
\]%
\[
(\mu^{\prime},\mu^{\prime\prime})\in\phi_{||}(u),\rho^{\prime}\times
\rho^{\prime\prime}\in\pi_{||}((\mu^{\prime},\mu^{\prime\prime}),u)\}.
\]
We apply Theorem \ref{The19} b).
\end{proof}

\section{The decomposition of the systems as parallel connection of systems}

\begin{theorem}
\label{The30}The function $\Phi$ and the numbers $n^{\prime},n^{\prime\prime
}>0,n^{\prime}+n^{\prime\prime}=n$ are given. The following statements are
equivalent (see Theorem \ref{The26}):

i) $\forall(\mu,\lambda)\in\mathbf{B}^{n}\times\mathbf{B}^{m},\forall
i\in\{1,...,n^{\prime}\},\forall j\in\{n^{\prime}+1,...,n\},$%
\[
\Phi_{i}(\mu_{1},...,\mu_{n^{\prime}},...,\mu_{j},...,\mu_{n},\lambda
)=\Phi_{i}(\mu_{1},...,\mu_{n^{\prime}},...,\overline{\mu_{j}},...,\mu
_{n},\lambda),
\]
$\forall(\mu,\lambda)\in\mathbf{B}^{n}\times\mathbf{B}^{m},\forall
i\in\{n^{\prime}+1,...,n\},\forall j\in\{1,...,n^{\prime}\},$%
\[
\Phi_{i}(\mu_{1},...,\mu_{j},...,\mu_{n^{\prime}+1},...,\mu_{n},\lambda
)=\Phi_{i}(\mu_{1},...,\overline{\mu_{j}},...,\mu_{n^{\prime}+1},...,\mu
_{n},\lambda);
\]

ii) $\forall(\mu,\lambda)\in\mathbf{B}^{n}\times\mathbf{B}^{m},$%
\[
\forall i\in\{1,...,n^{\prime}\},\forall j\in\{n^{\prime}+1,...,n\},\dfrac
{\partial\Phi_{i}}{\partial\mu_{j}}(\mu,\lambda)=0,
\]%
\[
\forall i\in\{n^{\prime}+1,...,n\},\forall j\in\{1,...,n^{\prime}%
\},\dfrac{\partial\Phi_{i}}{\partial\mu_{j}}(\mu,\lambda)=0;
\]

iii) the functions $\Phi^{\prime}:\mathbf{B}^{n^{\prime}}\times\mathbf{B}%
^{m}\rightarrow\mathbf{B}^{n^{\prime}},$ $\Phi^{\prime\prime}:\mathbf{B}%
^{n^{\prime\prime}}\times\mathbf{B}^{m}\rightarrow\mathbf{B}^{n^{\prime\prime
}}$ exist such that $\forall(\mu,\lambda)\in\mathbf{B}^{n}\times\mathbf{B}%
^{m},$%
\begin{equation}
\Phi_{1}(\mu,\lambda)=\Phi_{1}^{\prime}(\mu_{1},...,\mu_{n^{\prime}},\lambda),
\label{dec1}%
\end{equation}%
\[
...
\]%
\begin{equation}
\Phi_{n^{\prime}}(\mu,\lambda)=\Phi_{n^{\prime}}^{\prime}(\mu_{1}%
,...,\mu_{n^{\prime}},\lambda), \label{dec2}%
\end{equation}%
\begin{equation}
\Phi_{n^{\prime}+1}(\mu,\lambda)=\Phi_{1}^{\prime\prime}(\mu_{n^{\prime}%
+1},...,\mu_{n^{\prime}+n^{\prime\prime}},\lambda), \label{dec3}%
\end{equation}%
\[
...
\]%
\begin{equation}
\Phi_{n^{\prime}+n^{\prime\prime}}(\mu,\lambda)=\Phi_{n^{\prime\prime}%
}^{\prime\prime}(\mu_{n^{\prime}+1},...,\mu_{n^{\prime}+n^{\prime\prime}%
},\lambda). \label{dec4}%
\end{equation}
\end{theorem}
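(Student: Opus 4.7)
The plan is to establish the three equivalences by a triangular argument: prove i) $\Leftrightarrow$ ii) directly from the definition of the Boolean partial derivative, prove iii) $\Rightarrow$ i) by inspection of the structural form, and then prove i) $\Rightarrow$ iii) by explicit construction of $\Phi^{\prime},\Phi^{\prime\prime}$ obtained from $\Phi$ by substituting fixed values into the coordinates that $\Phi_{i}$ is claimed not to depend on.

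First I would dispatch i) $\Leftrightarrow$ ii): unwinding the definition of $\dfrac{\partial\Phi_{i}}{\partial\mu_{j}}(\mu,\lambda)=\Phi_{i}(\ldots,\mu_{j},\ldots,\lambda)\oplus\Phi_{i}(\ldots,\overline{\mu_{j}},\ldots,\lambda)$, the vanishing of this $\mathbf{B}$-valued quantity is literally the equality asserted in i). This mirrors the argument already carried out for $\Phi^{\prime}||\Phi^{\prime\prime}$ in Theorem \ref{The26} and requires no new idea. Next, iii) $\Rightarrow$ i) is also immediate: if (\ref{dec1})--(\ref{dec4}) hold, then $\Phi_{i}(\mu,\lambda)$ depends only on $\mu_{1},\ldots,\mu_{n^{\prime}}$ for $i\le n^{\prime}$ and only on $\mu_{n^{\prime}+1},\ldots,\mu_{n}$ for $i>n^{\prime}$, so flipping any coordinate outside the relevant block leaves $\Phi_{i}$ unchanged, which is precisely statement i).

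The substantive direction is i) $\Rightarrow$ iii), and this is the main obstacle, although it is essentially routine. Fix arbitrary reference values $\mu_{n^{\prime}+1}^{\ast},\ldots,\mu_{n}^{\ast}\in\mathbf{B}$ and $\mu_{1}^{\ast},\ldots,\mu_{n^{\prime}}^{\ast}\in\mathbf{B}$ (for concreteness, all zero). Define $\Phi^{\prime}:\mathbf{B}^{n^{\prime}}\times\mathbf{B}^{m}\rightarrow\mathbf{B}^{n^{\prime}}$ coordinatewise by
\[
\Phi_{i}^{\prime}(\eta_{1},\ldots,\eta_{n^{\prime}},\lambda)=\Phi_{i}(\eta_{1},\ldots,\eta_{n^{\prime}},\mu_{n^{\prime}+1}^{\ast},\ldots,\mu_{n}^{\ast},\lambda),\qquad i\in\{1,\ldots,n^{\prime}\},
\]
and analogously define $\Phi^{\prime\prime}:\mathbf{B}^{n^{\prime\prime}}\times\mathbf{B}^{m}\rightarrow\mathbf{B}^{n^{\prime\prime}}$ by
\[
\Phi_{k}^{\prime\prime}(\eta_{n^{\prime}+1},\ldots,\eta_{n},\lambda)=\Phi_{n^{\prime}+k}(\mu_{1}^{\ast},\ldots,\mu_{n^{\prime}}^{\ast},\eta_{n^{\prime}+1},\ldots,\eta_{n},\lambda),\qquad k\in\{1,\ldots,n^{\prime\prime}\}.
\]

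It then remains to verify (\ref{dec1})--(\ref{dec4}) for all $(\mu,\lambda)$. For $i\le n^{\prime}$, I would change the coordinates $\mu_{n^{\prime}+1},\ldots,\mu_{n}$ into the reference values $\mu_{n^{\prime}+1}^{\ast},\ldots,\mu_{n}^{\ast}$ one at a time: at each step, either the current coordinate already equals its target, or I apply hypothesis i) with the appropriate $j\in\{n^{\prime}+1,\ldots,n\}$ to flip it without changing $\Phi_{i}$. After at most $n^{\prime\prime}$ such substitutions,
\[
\Phi_{i}(\mu_{1},\ldots,\mu_{n^{\prime}},\mu_{n^{\prime}+1},\ldots,\mu_{n},\lambda)=\Phi_{i}(\mu_{1},\ldots,\mu_{n^{\prime}},\mu_{n^{\prime}+1}^{\ast},\ldots,\mu_{n}^{\ast},\lambda)=\Phi_{i}^{\prime}(\mu_{1},\ldots,\mu_{n^{\prime}},\lambda),
\]
giving (\ref{dec1})--(\ref{dec2}). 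The symmetric argument using the second half of hypothesis i) yields (\ref{dec3})--(\ref{dec4}). The only real care needed is the iterated application of i); this is why the hypothesis is stated for arbitrary $\mu$ rather than for a single configuration.
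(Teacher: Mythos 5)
Your proposal is correct and follows essentially the same route as the paper: i)$\Leftrightarrow$ii) by unwinding the definition of the Boolean derivative, iii)$\Rightarrow$i) by inspection, and i)$\Rightarrow$iii) by observing that $\Phi_{i}$ takes the same value on all configurations of the coordinates outside its block, connected by single-bit flips. Your version is in fact slightly more careful than the paper's, which writes the chain of equalities between configurations differing in more than one bit without spelling out the one-coordinate-at-a-time iteration that you make explicit.
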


\begin{proof}
i)$\Longleftrightarrow$ii) is obvious.

i)$\Longrightarrow$iii) From the fact that $\forall(\mu,\lambda)\in
\mathbf{B}^{n}\times\mathbf{B}^{m},$ $\forall i\in\{1,...,n^{\prime}\},$ we
have%
\[
\Phi_{i}(\mu_{1},...,\mu_{n^{\prime}},0,...,0,0)=\Phi_{i}(\mu_{1}%
,...,\mu_{n^{\prime}},0,...,0,1)=
\]%
\[
=\Phi_{i}(\mu_{1},...,\mu_{n^{\prime}},0,...,1,0)=\Phi_{i}(\mu_{1}%
,...,\mu_{n^{\prime}},0,...,1,1)=...
\]%
\[
...=\Phi_{i}(\mu_{1},...,\mu_{n^{\prime}},1,...,1,1),
\]
we infer the existence of $\Phi^{\prime}$ such that (\ref{dec1}%
),...,(\ref{dec2}) are true.

iii)$\Longrightarrow$i) From the existence of $\Phi^{\prime}$ such that
$\forall(\mu,\lambda)\in\mathbf{B}^{n}\times\mathbf{B}^{m},$ $\forall
i\in\{1,...,n^{\prime}\},$ $\forall j\in\{n^{\prime}+1,...,n\},$%
\[
\Phi_{i}(\mu_{1},...,\mu_{n^{\prime}},...,\mu_{j},...,\mu_{n},\lambda
)=\Phi_{i}^{\prime}(\mu_{1},...,\mu_{n^{\prime}},\lambda)=
\]%
\[
=\Phi_{i}(\mu_{1},...,\mu_{n^{\prime}},...,\overline{\mu_{j}},...,\mu
_{n},\lambda),
\]
we get that the first part of i) is fulfilled.
\end{proof}

\begin{definition}
If one of the previous properties i), ii), iii) is true, we say that
\textbf{the coordinates} $\Phi_{1},...,\Phi_{n^{\prime}}$ \textbf{do not
depend on} $\mu_{n^{\prime}+1},...,\mu_{n}$ and that \textbf{the coordinates}
$\Phi_{n^{\prime}+1},...,\Phi_{n}$ \textbf{do not depend on} $\mu_{1}%
,...,\mu_{n^{\prime}}.$ The coordinates $\{1,...,n^{\prime}\}$ and
$\{n^{\prime}+1,...,n\}$ are called \textbf{separated}. We also say that
$\{1,...,n\}$ accepts the $\Phi-$\textbf{partition} $\{1,...,n^{\prime}\},$
$\{n^{\prime}+1,...,n\}.$
\end{definition}

\begin{theorem}
\label{The32}We suppose that $\Phi_{1},...,\Phi_{n^{\prime}}$ do not depend on
$\mu_{n^{\prime}+1},...,$ $\mu_{n}$ and that $\Phi_{n^{\prime}+1},...,$
$\Phi_{n}$ do not depend on $\mu_{1},...,$ $\mu_{n^{\prime}}.$ Then the
functions $\Phi^{\prime}:\mathbf{B}^{n^{\prime}}\times\mathbf{B}%
^{m}\rightarrow\mathbf{B}^{n^{\prime}},$ $\Phi^{\prime\prime}:\mathbf{B}%
^{n^{\prime\prime}}\times\mathbf{B}^{m}\rightarrow\mathbf{B}^{n^{\prime\prime
}}$ exist such that $\Phi=\Phi^{\prime}||\Phi^{\prime\prime}.$
\end{theorem}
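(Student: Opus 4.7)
The plan is to reduce this statement directly to Theorem \ref{The30}, since the hypothesis is essentially statement i) of that theorem and the desired conclusion is essentially statement iii), modulo unpacking the definition of $\Phi'||\Phi''$.

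First I would observe that the hypothesis ``$\Phi_1,\ldots,\Phi_{n'}$ do not depend on $\mu_{n'+1},\ldots,\mu_n$ and $\Phi_{n'+1},\ldots,\Phi_n$ do not depend on $\mu_1,\ldots,\mu_{n'}$'' is, by the Definition following Theorem \ref{The30}, exactly statement i) of Theorem \ref{The30}. Invoking the equivalence i)$\Longleftrightarrow$iii) from that theorem, I obtain functions $\Phi':\mathbf{B}^{n'}\times\mathbf{B}^m\rightarrow\mathbf{B}^{n'}$ and $\Phi'':\mathbf{B}^{n''}\times\mathbf{B}^m\rightarrow\mathbf{B}^{n''}$ for which equations (\ref{dec1})--(\ref{dec4}) hold for every $(\mu,\lambda)\in\mathbf{B}^n\times\mathbf{B}^m$.

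Then I would finish by comparing these equations coordinate by coordinate with the defining formula of $\Phi'||\Phi''$ from the Notation preceding Theorem \ref{The26}, which reads $(\Phi'||\Phi'')((\mu',\mu''),\lambda)=(\Phi'(\mu',\lambda),\Phi''(\mu'',\lambda))$. Writing $\mu=(\mu_1,\ldots,\mu_n)$ as $(\mu',\mu'')$ with $\mu'=(\mu_1,\ldots,\mu_{n'})$ and $\mu''=(\mu_{n'+1},\ldots,\mu_n)$, equations (\ref{dec1})--(\ref{dec2}) give $\Phi_i(\mu,\lambda)=\Phi'_i(\mu',\lambda)=(\Phi'||\Phi'')_i((\mu',\mu''),\lambda)$ for $i\in\{1,\ldots,n'\}$, and (\ref{dec3})--(\ref{dec4}) give $\Phi_i(\mu,\lambda)=\Phi''_{i-n'}(\mu'',\lambda)=(\Phi'||\Phi'')_i((\mu',\mu''),\lambda)$ for $i\in\{n'+1,\ldots,n\}$. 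This establishes $\Phi=\Phi'||\Phi''$.

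There is essentially no obstacle here: the only non-trivial step, namely the construction of $\Phi'$ and $\Phi''$ from the independence hypothesis, has already been carried out in the proof of Theorem \ref{The30} (i)$\Longrightarrow$iii)) by evaluating $\Phi_i$ on any fixed choice of the irrelevant coordinates. The present theorem simply repackages that construction under the terminology of parallel composition of generator functions.
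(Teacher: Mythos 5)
Your proposal is correct and follows essentially the same route as the paper: both obtain $\Phi^{\prime},\Phi^{\prime\prime}$ from Theorem \ref{The30} iii) and then verify $\Phi=\Phi^{\prime}||\Phi^{\prime\prime}$ coordinate by coordinate against the defining formula $(\Phi^{\prime}||\Phi^{\prime\prime})((\mu^{\prime},\mu^{\prime\prime}),\lambda)=(\Phi^{\prime}(\mu^{\prime},\lambda),\Phi^{\prime\prime}(\mu^{\prime\prime},\lambda))$. No gaps.
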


\begin{proof}
From Theorem \ref{The30} iii) we have the existence of $\Phi^{\prime}%
,\Phi^{\prime\prime}$ such that (\ref{dec1}),..., (\ref{dec2}), (\ref{dec3}%
),..., (\ref{dec4}) are fulfilled. We denote by $\mu^{\prime},\mu
^{\prime\prime}$ the first $n^{\prime}$ coordinates of $\mu\in\mathbf{B}^{n}$
and respectively the last $n^{\prime\prime}$ coordinates of $\mu.$ We have
$\forall(\mu,\lambda)\in\mathbf{B}^{n}\times\mathbf{B}^{m},$%
\[
\Phi((\mu^{\prime},\mu^{\prime\prime}),\lambda)=(\Phi_{1}((\mu^{\prime}%
,\mu^{\prime\prime}),\lambda),...
\]%
\[
...,\Phi_{n^{\prime}}((\mu^{\prime},\mu^{\prime\prime}),\lambda),\Phi
_{n^{\prime}+1}((\mu^{\prime},\mu^{\prime\prime}),\lambda),...,\Phi_{n}%
((\mu^{\prime},\mu^{\prime\prime}),\lambda))=
\]%
\[
=(\Phi_{1}^{\prime}(\mu^{\prime},\lambda),...,\Phi_{n^{\prime}}^{\prime}%
(\mu^{\prime},\lambda),\Phi_{1}^{\prime\prime}(\mu^{\prime\prime}%
,\lambda),...,\Phi_{n^{\prime\prime}}^{\prime\prime}(\mu^{\prime\prime
},\lambda))=
\]%
\[
=(\Phi^{\prime}(\mu^{\prime},\lambda),\Phi^{\prime\prime}(\mu^{\prime\prime
},\lambda))=(\Phi^{\prime}||\Phi^{\prime\prime})((\mu^{\prime},\mu
^{\prime\prime}),\lambda).
\]
\end{proof}

\begin{notation}
Let be the system $f\subset\Xi_{\Phi},$ where $\Phi:\mathbf{B}^{n}%
\times\mathbf{B}^{m}\rightarrow\mathbf{B}^{n},f:U\rightarrow P^{\ast}%
(S^{(n)})$ and $\forall u\in U,$%
\[
f(u)=\{\Phi^{\rho}(\mu,u,\cdot)|\mu\in\phi_{0},\rho\in\pi(\mu,u)\}.
\]
We suppose that $n^{\prime},n^{\prime\prime}>0,n^{\prime}+n^{\prime\prime}=n$
are given. Then we denote by $\phi_{0}^{\prime}:U\rightarrow P^{\ast
}(\mathbf{B}^{n^{\prime}}),$ $\pi^{\prime}:\Delta^{\prime}\rightarrow P^{\ast
}(P_{n^{\prime}})$ the functions $\forall u\in U,$%
\[
\phi_{0}^{\prime}(u)=\{(\mu_{1},...,\mu_{n^{\prime}})|\mu\in\phi_{0}(u)\},
\]%
\[
\Delta^{\prime}=\{(\mu^{\prime},u)|u\in U,\mu^{\prime}\in\phi_{0}^{\prime
}(u)\}
\]
and respectively $\forall(\mu^{\prime},u)\in\Delta^{\prime},$%
\[
\pi^{\prime}(\mu^{\prime},u)=\{(\rho_{1},...,\rho_{n^{\prime}})|\exists
\mu^{\prime\prime}\in\mathbf{B}^{n^{\prime\prime}},(\mu^{\prime},\mu
^{\prime\prime})\in\phi_{0}(u)\text{,}\rho\in\pi((\mu^{\prime},\mu
^{\prime\prime}),u)\}.
\]
The functions $\phi_{0}^{\prime\prime}:U\rightarrow P^{\ast}(\mathbf{B}%
^{n^{\prime\prime}}),\pi^{\prime\prime}:\Delta^{\prime\prime}\rightarrow
P^{\ast}(P_{n^{\prime\prime}})$ are obviously defined in this moment.
\end{notation}

\begin{notation}
If $\forall u\in U,$ $\forall\mu\in\phi_{0}(u),$ $\forall\rho^{\prime}\in
\pi^{\prime}(\mu^{\prime},u),$ $\forall\rho^{\prime\prime}\in\pi^{\prime
\prime}(\mu^{\prime\prime},u),$ $\exists\widetilde{\rho}\in\pi(\mu,u),$
$\forall t\in\mathbf{R},$%
\[
\Phi^{\widetilde{\rho}}(\mu,u,t)=\Phi^{\rho^{\prime}\times\rho^{\prime\prime}%
}(\mu,u,t),
\]
then we denote $\forall u\in U,\forall\mu\in\phi_{0}(u),\pi(\mu,u)\approx
\pi^{\prime}(\mu^{\prime},u)\times\pi^{\prime\prime}(\mu^{\prime\prime},u).$
$\pi^{\prime},\pi^{\prime\prime}$ are the previous ones, $\mu^{\prime}%
,\rho^{\prime}$ are the first $n^{\prime}$ coordinates of $\mu,\rho$ and
$\mu^{\prime\prime},\rho^{\prime\prime}$ are the last $n^{\prime\prime}$
coordinates of $\mu,\rho,$ where $n^{\prime}+n^{\prime\prime}=n.$
\end{notation}

\begin{remark}
We can see that $\forall u\in U,\phi_{0}(u)\subset\phi_{0}^{\prime}%
(u)\times\phi_{0}^{\prime\prime}(u)$ and $\forall u\in U,\forall\mu\in\phi
_{0}(u),\pi(\mu,u)\subset\pi^{\prime}(\mu^{\prime},u)\times\pi^{\prime\prime
}(\mu^{\prime\prime},u)$ hold.
\end{remark}

\begin{theorem}
\label{The34}The regular system $f\subset\Xi_{\Phi}$ is given and we suppose
that the functions $\Phi^{\prime}:\mathbf{B}^{n^{\prime}}\times\mathbf{B}%
^{m}\rightarrow\mathbf{B}^{n^{\prime}},$ $\Phi^{\prime\prime}:\mathbf{B}%
^{n^{\prime\prime}}\times\mathbf{B}^{m}\rightarrow\mathbf{B}^{n^{\prime\prime
}}$ exist such that $n^{\prime},n^{\prime\prime}>0,n^{\prime}+n^{\prime\prime
}=n$ and $\forall\mu\in\mathbf{B}^{n},\forall\lambda\in\mathbf{B}^{m},$ the
equations (\ref{dec1}),..., (\ref{dec2}), (\ref{dec3}),..., (\ref{dec4}) are
fulfilled. Then $\Phi=\Phi^{\prime}||\Phi^{\prime\prime}$ and the systems
$f^{\prime}\subset\Xi_{\Phi^{\prime}},f^{\prime\prime}\subset\Xi_{\Phi
^{\prime\prime}}$ defined by $f^{\prime}:U\rightarrow P^{\ast}(S^{(n^{\prime
})}),$ $f^{\prime\prime}:U\rightarrow P^{\ast}(S^{(n^{\prime\prime})}),\forall
u\in U,$%
\[
f^{\prime}(u)=\{\Phi^{\prime\rho^{\prime}}(\mu^{\prime},u,\cdot)|\mu^{\prime
}\in\phi_{0}^{\prime}(u),\rho^{\prime}\in\pi^{\prime}(\mu^{\prime},u)\},
\]%
\[
f^{\prime\prime}(u)=\{\Phi^{\prime\prime\rho^{\prime\prime}}(\mu^{\prime
\prime},u,\cdot)|\mu^{\prime\prime}\in\phi_{0}^{\prime\prime}(u),\rho
^{\prime\prime}\in\pi^{\prime\prime}(\mu^{\prime\prime},u)\}
\]
satisfy $f\subset f^{\prime}||f^{\prime\prime}$. If $\forall u\in U,\phi
_{0}(u)=\phi_{0}^{\prime}(u)\times\phi_{0}^{\prime\prime}(u)$ and $\forall
u\in U,\forall\mu\in\phi_{0}(u),\pi(\mu,u)\approx\pi^{\prime}(\mu^{\prime
},u)\times\pi^{\prime\prime}(\mu^{\prime\prime},u),$ then we have
$f=f^{\prime}||f^{\prime\prime}.$
\end{theorem}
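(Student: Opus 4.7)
The plan is to break the proof into three clean pieces, each of which is essentially an application of a result already proved in the excerpt.

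First, the equality $\Phi=\Phi'||\Phi''$ is immediate: the hypotheses of Theorem \ref{The34} are precisely condition iii) of Theorem \ref{The30}, and Theorem \ref{The32} then produces $\Phi'||\Phi''=\Phi$ (in fact the $\Phi'$ and $\Phi''$ assumed here are exactly the components coming out of Theorem \ref{The30}). Next, to justify that $f'\subset\Xi_{\Phi'}$ and $f''\subset\Xi_{\Phi''}$, I would invoke Theorem \ref{The19} b): the defining formula for $f'(u)$ has the exact shape required, with $\phi_0'$ serving as initial state function and $\pi'$ as computation function (this needs only the easy remark that the projection $(\mu_1,\dots,\mu_{n'})$ of any $\mu\in\phi_0(u)$ lies in $\phi_0'(u)$ by construction, and similarly for $\pi'$).

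For the inclusion $f\subset f'||f''$, I would fix $u\in U$ and an arbitrary element $\Phi^\rho(\mu,u,\cdot)\in f(u)$ with $\mu\in\phi_0(u)$, $\rho\in\pi(\mu,u)$. Splitting $\mu=(\mu',\mu'')$ and $\rho=(\rho',\rho''):=\rho'\times\rho''$ in the obvious way, the definitions of $\phi_0',\phi_0'',\pi',\pi''$ immediately give $\mu'\in\phi_0'(u)$, $\mu''\in\phi_0''(u)$, $\rho'\in\pi'(\mu',u)$, $\rho''\in\pi''(\mu'',u)$. Since $\Phi=\Phi'||\Phi''$, Theorem \ref{The27} turns $\Phi^\rho(\mu,u,\cdot)$ into $(\Phi'^{\rho'}(\mu',u,\cdot),\Phi''^{\rho''}(\mu'',u,\cdot))$, which by construction lies in $f'(u)\times f''(u)=(f'||f'')(u)$.

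For the reverse inclusion under the extra hypotheses $\phi_0(u)=\phi_0'(u)\times\phi_0''(u)$ and $\pi(\mu,u)\approx\pi'(\mu',u)\times\pi''(\mu'',u)$, I would run the argument in the opposite direction. Take $(\Phi'^{\rho'}(\mu',u,\cdot),\Phi''^{\rho''}(\mu'',u,\cdot))\in(f'||f'')(u)$. By the first extra hypothesis $\mu:=(\mu',\mu'')\in\phi_0(u)$. By the second, which is precisely the definition of $\approx$, there exists $\widetilde{\rho}\in\pi(\mu,u)$ with $\Phi^{\widetilde{\rho}}(\mu,u,t)=\Phi^{\rho'\times\rho''}(\mu,u,t)$ for every $t$; applying Theorem \ref{The27} once more to the right-hand side rewrites this common value as $(\Phi'^{\rho'}(\mu',u,\cdot),\Phi''^{\rho''}(\mu'',u,\cdot))$, and this element now manifestly belongs to $f(u)$.

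The only place where care is required is the reverse inclusion, and the only reason it requires anything beyond bookkeeping is that one cannot a priori expect every product $\rho'\times\rho''$ to lie in $\pi(\mu,u)$ itself — the extra condition $\approx$ was designed exactly to guarantee a $\widetilde{\rho}\in\pi(\mu,u)$ realising the same trajectory. Once Theorem \ref{The27} is used as the translator between the product computation and the joint computation, every step is mechanical.
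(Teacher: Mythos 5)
Your proof is correct and follows essentially the same route as the paper: Theorem \ref{The32} for $\Phi=\Phi^{\prime}||\Phi^{\prime\prime}$, Theorem \ref{The27} as the translator between $\Phi^{\rho^{\prime}\times\rho^{\prime\prime}}$ and the pair of component trajectories, and the definitions of $\phi_{0}^{\prime},\phi_{0}^{\prime\prime},\pi^{\prime},\pi^{\prime\prime}$ for the inclusion $f\subset f^{\prime}||f^{\prime\prime}$. You in fact supply two details the paper leaves implicit, namely the check via Theorem \ref{The19} b) that $f^{\prime}$ and $f^{\prime\prime}$ are regular, and the explicit use of the $\approx$ hypothesis for the reverse inclusion, which the paper dismisses as ``obvious''.
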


\begin{proof}
The fact that $\Phi=\Phi^{\prime}||\Phi^{\prime\prime}$ results from Theorem
\ref{The32}. We denote like previously with $\mu^{\prime},\mu^{\prime\prime}$
the first $n^{\prime}$ coordinates of $\mu\in\mathbf{B}^{n}$ and the last
$n^{\prime\prime}$ coordinates of $\mu$ and the notations are similar for
$\rho^{\prime},\rho^{\prime\prime}$ and $\rho\in P_{n}.$ We have $\forall u\in
U,$%
\[
f(u)=\{\Phi^{\rho}(\mu,u,\cdot)|\mu\in\phi_{0}(u),\rho\in\pi(\mu,u)\}=
\]%
\[
=\{(\Phi^{\prime}||\Phi^{\prime\prime})^{\rho^{\prime}\times\rho^{\prime
\prime}}((\mu^{\prime},\mu^{\prime\prime}),u,\cdot)|\mu\in\phi_{0}(u),\rho
\in\pi(\mu,u)\}=
\]%
\[
\overset{\text{Theorem \ref{The27}}}{=}\{(\Phi^{\prime\rho^{\prime}}%
(\mu^{\prime},u,\cdot),\Phi^{\prime\prime\rho^{\prime\prime}}(\mu
^{\prime\prime},u,\cdot))|\mu\in\phi_{0}(u),\rho\in\pi(\mu,u)\}\subset
\]%
\[
\subset\{(\Phi^{\prime\rho^{\prime}}(\mu^{\prime},u,\cdot),\Phi^{\prime
\prime\rho^{\prime\prime}}(\mu^{\prime\prime},u,\cdot))|
\]%
\[
|(\mu^{\prime},\mu^{\prime\prime})\in\phi_{0}^{\prime}(u)\times\phi
_{0}^{\prime\prime}(u),\rho^{\prime}\times\rho^{\prime\prime}\in\pi^{\prime
}(\mu^{\prime},u)\times\pi^{\prime\prime}(\mu^{\prime\prime},u)\}=
\]%
\[
=\{\Phi^{\prime\rho^{\prime}}(\mu^{\prime},u,\cdot)|\mu^{\prime}\in\phi
_{0}^{\prime}(u),\rho^{\prime}\in\pi^{\prime}(\mu^{\prime},u)\}\times
\]%
\[
\times\{\Phi^{\prime\prime\rho^{\prime\prime}}(\mu^{\prime\prime},u,\cdot
)|\mu^{\prime\prime}\in\phi_{0}^{\prime\prime}(u),\rho^{\prime\prime}\in
\pi^{\prime\prime}(\mu^{\prime\prime},u)\}=
\]%
\[
=f^{\prime}(u)\times f^{\prime\prime}(u)=(f^{\prime}||f^{\prime\prime})(u).
\]
The second statement of the theorem is obvious.
\end{proof}

\end{document}